\title{On the expressive power of read-once determinants \thanks{A preliminary version of this work has appeared in 20th International Symposium on
Fundamentals of Computation Theory, Proceedings. Lecture Notes in Computer Science, Volume 9210, Springer 2015, pages 95-105}}
\author{N. R. Aravind\inst{1},
  Pushkar S. Joglekar\inst{2},
 \institute{Indian Institute of Technology, Hyderabad, India\\
  \email{aravind@iith.ac.in}\and
  Vishwakarma Institute of Technology, Pune, India\\
  \email{joglekar.pushkar@gmail.com}}
}
\date{}
\def \hs {\hspace{3mm}}
\newenvironment{proof-of}[1]{\noindent{\bf Proof of #1}\hspace*{0.5em}}{\qed\bigskip}
\newenvironment{proof-of-lemma}[1]{\noindent{\bf Proof of Lemma #1}\hspace*{1em}}{\qed\bigskip}
\begin{document}

\maketitle 
\begin{abstract}
We introduce and study the notion of read-$k$ projections of the determinant: a polynomial $f \in \mathbb{F}[x_1, \ldots, x_n]$ is called a {\it read-$k$ projection of determinant} if $f=det(M)$, where entries of matrix $M$ are either field elements or variables such that each variable appears at most $k$ times in $M$. A monomial set $S$ is said to be expressible as read-$k$ projection of determinant if there is a read-$k$ projection of determinant $f$ such that the monomial set of $f$ is equal to $S$.
We obtain basic results relating read-$k$ determinantal projections to the well-studied notion of determinantal complexity. 
We show that for sufficiently large $n$, the $n \times n$ permanent polynomial $Perm_n$ and the elementary symmetric polynomials of degree $d$ on $n$ variables $S_n^d$ for $2 \leq d \leq n-2$ are not expressible as read-once projection of determinant, whereas $mon(Perm_n)$ and $mon(S_n^d)$ are expressible as read-once projections of determinant. We also give examples of monomial sets which are not expressible as read-once projections of determinant.
\end{abstract}
\section{Introduction}
In a seminal work \cite{V79}, Valiant introduced the notion of the determinantal complexity of multivariate polynomials and proved that any polynomial $f \in \mathbb{F}[x_1, \ldots, x_n]$ can be expressed as $f= det(M_{m \times m})$, where the entries of $M$ are affine linear forms in the variables $\{x_1, x_2, \ldots, x_n \}$. The smallest value of $m$ for which $f=det(M_{m \times m})$ holds is called the determinantal complexity of $f$ and denoted by $dc(f)$.
Let $Perm_n$ denote the permanent polynomial:
\[
Perm_n(x_{11}, \ldots, x_{nn})= \sum_{\sigma \in S_n } \prod_{i=1}^{n} x_{i, \sigma(i)}
\] 

Valiant postulated that the determinantal complexity of $Perm_n$ is not polynomially bounded - i.e. $dc(Perm_n)=n^{\omega(1)}$. This is one of the most important conjectures in complexity theory. So far the best known lower bound on $dc(Perm_n)$ is $\frac{n^2}{2}$, known from \cite{MR04}, \cite{CCL08}.

Another related notion considered in \cite{V79} is projections of polynomials: A polynomial $f \in \mathbb{F} [x_1, \ldots, x_n]$ is said to be a projection of $g \in \mathbb{F}[y_1, \ldots, y_m],  m \geq n$ if $f$ is obtained from $g$ by substituting each variable $y_i$ by some variable in $\{x_1, x_2, \ldots, x_n\}$ or by an element of field $\mathbb{F}$. Valiant's postulate implies that  if $Perm_n$ is projection of the Determinant polynomial $Det_m$ then $m$ is $n^{\omega(1)}$. 
We refer to the expository article by von-zur Gathen on Valiant's result \cite{G87}.

We define the notion of read-$k$ projection of determinant, which is a natural restriction of the notion of projection of determinant. 
Let $X=\{x_1,\ldots,x_n\}$ be a set of variables and let $\mathbb{F}$ be a field.

\begin{definition}
We say that a matrix $M_{m \times m}$ is a \emph{read-$k$ matrix} over $X \cup \mathbb{F}$ if the entries of $M$ are from $X \cup \mathbb{F}$ and for every $x \in X$, there are at most $k$ pairs of indices $(i,j)$ such that $M_{i,j}=x$.
We say that a polynomial $f\in \mathbb{F}[X]$ is \emph{read-$k$ projection of $Det_m$} if there exists a read-$k$ matrix $M_{m \times m}$ over $X$ such that $f=det(M)$.
\end{definition}

{\it Remark:} We use the phrase \emph{a polynomial is expressible as read-once determinant} in place of ``a polynomial is read-$1$ projection of determinant'' in some places. Note that only a multilinear polynomial can be expressible as a read-once determinant.

The following upper bound on determinantal complexity, proved in Section 2, is one of the motivations for studying this model.
 
\begin{theorem}\label{dc-ROD} 
Let $f \in \mathbb{F}[x_1,\ldots,x_n]$. If $f$ is a read-$k$ projection of determinant, then $dc(f) \leq nk$.
\end{theorem}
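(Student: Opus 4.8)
The plan is to use the Schur complement (Schur's determinant identity) to shrink the $m\times m$ matrix $M$ down to a square matrix whose size equals the number of rows of $M$ that actually contain a variable, and to observe that this number is at most $nk$. Throughout, recall that every entry of $M$ lies in $X\cup\mathbb{F}$, so in particular every entry of $M$ is an affine linear form in the $x_i$'s.

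\medskip\noindent\emph{Step 1: isolate the all-constant rows.} Let $t$ be the number of rows of $M$ that contain at least one occurrence of a variable. Since $M$ has at most $nk$ variable-occurrences in total (each of the $n$ variables occurring at most $k$ times) and each occurrence lies in a single row, we have $t\le nk$. Permuting rows, assume rows $1,\dots,t$ contain all the variables and rows $t+1,\dots,m$ have all their entries in $\mathbb{F}$. If the $(m-t)\times m$ constant matrix $B_0$ formed by these last rows does not have full row rank, then its rows are linearly dependent over $\mathbb{F}$, hence so are the rows of $M$, so $\det(M)=0=f$ and $dc(f)\le 1\le nk$; so assume $B_0$ has rank $m-t$. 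Then some set $J$ of $m-t$ columns makes the corresponding $(m-t)\times(m-t)$ submatrix of $B_0$ invertible over $\mathbb{F}$; permute columns so the columns of $J$ come last. We now have a block form $M=\begin{pmatrix}A & B\\ C & D\end{pmatrix}$ with $A$ of size $t\times t$, $D$ of size $(m-t)\times(m-t)$ and invertible with entries in $\mathbb{F}$, $C$ of size $(m-t)\times t$ with all entries in $\mathbb{F}$ (it is part of the last rows), and $A,B$ with entries in $X\cup\mathbb{F}$.

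\medskip\noindent\emph{Step 2: Schur complement and conclusion.} Apply the identity
\[
\det(M)=\det(D)\cdot\det\!\bigl(A-BD^{-1}C\bigr).
\]
The point is that $D^{-1}C$ is a matrix over $\mathbb{F}$, so $B(D^{-1}C)$ is a matrix whose entries are $\mathbb{F}$-linear combinations of the degree-$\le 1$ entries of $B$, hence themselves affine linear forms; therefore $G:=A-BD^{-1}C$ is a $t\times t$ matrix all of whose entries are affine linear forms in $X$, and $\det(G)=f/\det(D)$. To absorb the nonzero constant $c:=\det(D)\in\mathbb{F}$, let $G'$ be $G$ with its first row scaled by $c$; then $G'$ still has affine linear entries and $\det(G')=c\,\det(G)=f$, so $dc(f)\le t\le nk$. (The degenerate cases are immediate: if $t=m$ no reduction is needed since $M$ itself is affine linear of size $m=t\le nk$, and if $t=0$ then $f$ is a constant and $dc(f)\le 1$.)

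\medskip The step I expect to be the crux is arranging the block decomposition so that the off-diagonal block $C$ that survives into the Schur complement is a \emph{constant} matrix: this is exactly why one must peel off the all-constant rows (rather than all-constant columns taken in isolation) and locate an invertible constant block \emph{inside those rows}. If $C$ carried a variable, then $BD^{-1}C$ would in general be quadratic and the reduction would no longer land inside the class of affine-linear matrices. Verifying that a subset of all-constant rows being dependent forces $\det(M)=0$, and the routine bookkeeping with the row/column permutations, are the remaining details.
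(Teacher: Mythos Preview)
Your proof is correct and follows essentially the same route as the paper's: isolate the at most $nk$ rows that carry variables, verify the remaining constant block of rows has full rank (else $f=0$), and then eliminate that block to leave a $t\times t$ matrix with affine linear entries. The only cosmetic difference is that the paper carries out the elimination by explicit column operations (adding multiples of the columns over the invertible constant block to zero out the rest of the constant rows, then reading off the block-triangular determinant), whereas you invoke the Schur complement identity $\det(M)=\det(D)\,\det(A-BD^{-1}C)$ directly; these are the same computation. One very minor point: your row and column permutations may flip the sign of the determinant, so the constant you absorb into the first row of $G$ is $\pm\det(D)$ rather than $\det(D)$ itself---harmless, but worth saying.
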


The above theorem immediately shows that read-$k$ projections of determinant are not universal for any constant $k$; indeed in the case of finite fields, by simple counting arguments, we can show that most polynomials are not read-$k$ expressible for $k=2^{o(n)}$.

Ryser's formula for the permanent expresses the permanent polynomial $Perm_n$ as a read-$2^{n-1}$ projection of determinant.
In contrast, it follows from Theorem \ref{dc-ROD} that Valiant's hypothesis implies the following: $Perm_n \neq det(M_{m \times m})$ for a read-$n^{O(1)}$ matrix $M$ of \emph{any size}.
So the expressibility question is more relevant in the context of read-$k$ determinant model rather than the size lower bound question. In this paper, we obtain the following results for the simplest case $k=1$.

\begin{theorem}\label{Perm-not-ROD}
 For $n > 5$, the $n \times n$ permanent polynomial $Perm_n$ is not expressible as a read-once determinant over the field of reals and over finite fields in which $-3$ is a quadratic non-residue.
\end{theorem}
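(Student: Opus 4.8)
The plan is to suppose, towards a contradiction, that $Perm_n=\det(M)$ for a read-once matrix $M$ of some size $m\times m$, and to read structural information out of the low-order partial derivatives of both sides. Note first that Theorem~\ref{dc-ROD} by itself is useless here: it only gives $dc(Perm_n)\le n^2$, which is consistent with the known bound $dc(Perm_n)\ge n^2/2$, so something genuinely combinatorial is needed. It is also convenient to observe that one may reduce to a single value of $n$: restricting $Perm_n$ by setting $x_{ij}=\delta_{ij}$ whenever $\max(i,j)>6$ turns it into $Perm_6$ on the remaining $36$ variables and turns $M$ into a read-once matrix computing $Perm_6$, so once $Perm_6$ is shown not to be a read-once determinant (over the relevant fields), the same follows for every $n>5$.

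\emph{A placement lemma.} Write $x_{ij}=M_{\rho(ij),\gamma(ij)}$ for the unique cell of $M$ holding $x_{ij}$. If the grid cells $(i,j)$ and $(k,\ell)$ are in general position, i.e.\ $i\ne k$ and $j\ne\ell$, then $\partial^2 Perm_n/\partial x_{ij}\partial x_{k\ell}$ is a permanent of a generic $(n-2)\times(n-2)$ submatrix, hence a nonzero polynomial. On the other hand, since $M$ is read-once this second partial equals $0$ if $x_{ij}$ and $x_{k\ell}$ share a row or a column of $M$, and otherwise equals $\pm$ a size-$(m-2)$ minor of $M$; as the left side is nonzero, the two variables must lie in distinct rows and distinct columns of $M$. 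Contrapositively, any two variables that share a row or column of $M$ share a grid-row or a grid-column, and an elementary ``intersecting crosses'' argument (a family of grid cells that pairwise share a coordinate lies in a single row or column) then shows that the variables inside a single row of $M$, and likewise inside a single column of $M$, are confined to one grid-row or one grid-column, hence at most $n$ of them. Thus every row and column of $M$ carries a well-defined grid-line ``label'' (or at most one variable), and $m\ge n$.

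\emph{Pinning down the configuration.} Using the labels together with the vanishing relations $\partial^2 Perm_n/\partial x_{ij}\partial x_{i\ell}=0$ and $\partial^2 Perm_n/\partial x_{ij}\partial x_{kj}=0$ — each of which, whenever the two variables lie in distinct rows and columns of $M$, says that a specific size-$(m-2)$ minor of $M$ vanishes identically — I would analyse how the $n$ variables of one grid-row and the $n$ variables of one grid-column are forced to sit among the labelled rows and columns of $M$, strip off rows and columns of $M$ that are scalar multiples of standard basis vectors (a determinant-preserving and read-once-preserving operation), and thereby reduce to a short list of small read-once matrices that must compute a fixed small polynomial $q$ obtained by restricting $Perm_n$ to a constant number of variables. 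Equating the determinant of each candidate with $q$ yields polynomial equations in the unknown field entries, and the point is that these are solvable over $\mathbb{F}$ exactly when a quadratic of the shape $t^2+t+1$ has a root, equivalently when the binary quadratic form $a^2-ab+b^2$ factors over $\mathbb{F}$, equivalently when $\mathbb{F}$ contains a primitive cube root of unity — which is precisely the condition that $-3$ be a quadratic residue. Over $\mathbb{R}$, and over finite fields in which $-3$ is a non-residue, every candidate is then excluded; the threshold $n>5$ is exactly what makes the small polynomial $q$ and this reduction available, since for $n\le 5$ the analogous configurations genuinely exist.

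The main obstacle is the step ``pinning down the configuration'': a priori $m$ is unbounded — a determinant can be padded with identity blocks without leaving the read-once model — and while rows and columns of $M$ that are scalar multiples of standard basis vectors come off for free, rows mixing two or more nonzero constants with variables are not obviously removable. Converting the placement labels and the family of forced vanishing minors into enough rigidity to cut $M$ down to a bounded-size core, and then carrying out the (still nontrivial) finite case check in which the $-3$ condition is invoked, is where the real difficulty lies; the no-cancellation structure peculiar to read-once determinants (in contrast to the situation for $mon(Perm_n)$, which \emph{is} expressible) should be the key leverage.
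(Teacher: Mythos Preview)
Your reduction to $n=6$ and your placement lemma are both correct and match what the paper does (in fact the paper proves exactly this ``distinct rows and columns'' statement, but for a much smaller polynomial). The gap is precisely where you locate it yourself: the step ``pinning down the configuration'' is the entire content of the argument, and you have not supplied a mechanism for it. Stripping off unit-vector rows and columns does not suffice, and labelling rows and columns of $M$ by grid-lines of a $6\times 6$ permanent still leaves you with $36$ variables, an unbounded constant block, and a combinatorially heavy finite case analysis that you do not carry out.

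The paper short-circuits all of this with one move you do not make: rather than analysing a read-once matrix for $Perm_6$, it first exhibits an explicit read-once substitution (most entries set to $0$ or $1$) under which $Perm_6$ becomes $4\,S_4^2(x_1,x_2,x_3,x_4)$. Since ROD is closed under scalar substitution, $Perm_6\in\mathrm{ROD}$ would force $S_4^2\in\mathrm{ROD}$, and now only four variables are in play. The size reduction you were missing then comes from a rank argument on the constant lower-right block $L$: writing $p=m-4$, the monomial structure of $S_4^2$ forces $\mathrm{rank}(L)\in\{p-1,p-2\}$, and a short span argument eliminates $p-1$. With $\mathrm{rank}(L)=p-2$ one can row/column-reduce to a fixed $6\times 6$ shape, and comparing coefficients yields $(ps-rq)^2=-(ps)(rq)$, i.e.\ $y^2-xy+x^2=0$, whose solvability is exactly the condition that $-3$ be a quadratic residue --- the endpoint you correctly anticipated. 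So your instinct about the final obstruction is right; what you are missing is the reduction $Perm_6\rightsquigarrow S_4^2$ and the rank-of-$L$ dichotomy that makes the bounded-core step go through.
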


We prove Theorem \ref{Perm-not-ROD} in Section 3 as a consequence of non-expressibility of elementary symmetric polynomials as read-once determinants.


Our interest in this model also stems from the following reason.
Most of the existing lower-bound techniques for various models, including monotone circuits, depth-3 circuits, non-commutative ABPs etc, are not sensitive to the coefficients of the monomials of the polynomial for which the lower-bound is proved. 
For example, the monotone circuit lower-bound for permanent polynomial by Jerrum and Snir \cite{JS82}, carries over to any polynomial with same monomial set as permanent. The same applies to Nisan's rank argument \cite{N91} or the various lower bound results based on the partial derivative techniques (see e.g. \cite{NW97}, \cite{GR00}). 

On the other hand, for proving lower bounds on the determinantal complexity of the permanent, one must use some properties of the permanent polynomial which are not shared by the determinant polynomial. A natural question is whether there are models more restrictive than determinantal complexity (so that proving lower bounds may be easier) and which are coefficient-sensitive. Read-$k$ determinants appear to be a good choice for such a model.

In light of the above discussion and to formally distinguish the complexity of a polynomial and that of its monomial set, we have the following definition.

\begin{definition}
For $f \in F[X]$, we denote by \emph{mon(f)} the set of all monomials with non-zero coefficient in $f$.
We say that a set $S$ of monomials is \emph{expressible as read-$k$ determinant} if there exists a polynomial $f \in \mathbb{F}[X]$ 
such that 
$f$ is a read-$k$ projection of determinant and $S=mon(f)$.
\end{definition}

Let $S_n^d$ denote the elementary symmetric polynomial of degree $d$:
\[
S_n^d(x_1,x_2,\ldots, x_n) = \sum_{A\subseteq \{1,2, \ldots, n\}, |A|=d}~~ \prod_{i \in A} x_i 
\]

In Section 3, we prove the non-expressibility of elementary symmetric polynomials as read-once determinants; a contrasting result also proved in the same section is the following:

\begin{theorem}\label{monomial-existence} For all $n \geq d \geq 1$ and $|\mathbb{F}| \geq n$, the monomial set of $S_n^d$ is expressible as projection of read-once determinant.
\end{theorem}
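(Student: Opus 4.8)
The plan is to write down an explicit read-once matrix $M$ whose determinant has monomial set exactly $\{\prod_{i\in A}x_i : A\subseteq[n],\,|A|=d\}$. I will place each variable $x_i$ at the diagonal position $(i,i)$ of an $m\times m$ matrix (for $d\ge 2$ this is in fact forced up to permuting rows and columns: otherwise two variables share a line and no $d$-subset containing both could ever appear as a monomial), so that $M=X+B$ with $X=\mathrm{diag}(x_1,\dots,x_n,0,\dots,0)$ and $B$ a fixed matrix over $\mathbb{F}$ with $B_{ii}=0$ for $i\le n$. Since every $x_i$ occurs once, $\det M$ is multilinear in $x_1,\dots,x_n$, and a standard expansion gives
\[
\det M \;=\; \sum_{A\subseteq[n]}\Big(\prod_{i\in A}x_i\Big)\,\det\!\big(B[\bar A]\big),\qquad \bar A=[m]\setminus A,
\]
where $B[\bar A]$ is the principal submatrix on $\bar A$. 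Distinct $A$ give distinct monomials, so no cancellation is possible and $\mathrm{mon}(\det M)=\{x^A:\det(B[\bar A])\ne 0\}$. Hence it suffices to build a constant matrix $B$, zero on the first $n$ diagonal entries, with $\det(B[\bar A])\ne 0$ if and only if $|A|=d$.

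For this I use $m=2n$ and a pair of complementary Vandermonde blocks. Fix distinct $\lambda_1,\dots,\lambda_n\in\mathbb{F}$ (possible since $|\mathbb{F}|\ge n$) and set
\[
M=\begin{pmatrix} \mathrm{diag}(x_1,\dots,x_n) & I_n \\ C & G\end{pmatrix},
\]
where $G$ is the $n\times n$ matrix with columns $g_\ell=(1,\lambda_\ell,\dots,\lambda_\ell^{\,d-1},0,\dots,0)^{\top}$ and $C$ is the $n\times n$ matrix with columns $c_\ell=(0,\dots,0,1,\lambda_\ell,\dots,\lambda_\ell^{\,n-d-1})^{\top}$. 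This $M$ is read-once. To compute $\det(B[\bar A])$ with $B=M|_{x=0}$: each row of $B$ indexed by $[n]\setminus A$ has a single nonzero entry, the $1$ in the corresponding column of the $I_n$ block, so these rows are forced in the Leibniz expansion and what remains, up to sign, is $\det\tilde M_A$, where $\tilde M_A$ is the $n\times n$ matrix whose $\ell$-th column is $g_\ell$ if $\ell\in A$ and $c_\ell$ if $\ell\notin A$.

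Finally I verify that $\det\tilde M_A\ne 0$ exactly when $|A|=d$. Every $g$-column is supported on coordinates $1,\dots,d$ and every $c$-column on coordinates $d+1,\dots,n$, so in any nonzero term of the Leibniz expansion of $\det\tilde M_A$ the first $d$ rows must be matched to $g$-columns (forcing $|A|\ge d$) and the last $n-d$ rows to $c$-columns (forcing $|A|\le d$); thus $\det\tilde M_A=0$ when $|A|\ne d$. When $|A|=d$, reordering columns makes $\tilde M_A$ block diagonal, a $d\times d$ Vandermonde on $\{\lambda_\ell:\ell\in A\}$ above an $(n-d)\times(n-d)$ Vandermonde on $\{\lambda_\ell:\ell\notin A\}$, whose determinant is a product of two nonzero Vandermonde determinants. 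Combining, $\mathrm{mon}(\det M)=\{x^A:|A|=d\}=\mathrm{mon}(S_n^d)$; the boundary cases $d=1$ and $d=n$ need no separate treatment, since the Vandermonde blocks merely degenerate while every step goes through verbatim.

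The crux is the middle step: designing $B$ so that the complementary principal minors vanish at all sizes except $d$. The obvious candidates fail — putting all $n$ variables on the diagonal of an $n\times n$ (or $(n{+}1)\times(n{+}1)$) matrix, or using rank-one updates of the identity, always leaves the $A=\emptyset$ term, i.e. the constant monomial $1$, with nonzero coefficient, and more generally gives no control over a single minor size. Splitting into two transversally placed Vandermonde blocks of complementary ranks $d$ and $n-d$ is precisely what enforces ``$|A|=d$'', and this is the only place the hypothesis $|\mathbb{F}|\ge n$ enters.
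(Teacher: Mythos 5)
Your construction is correct and follows essentially the same route as the paper's: place the variables on the diagonal so that the coefficient of $\prod_{i\in A}x_i$ is the complementary principal minor of the constant part, and use Vandermonde columns over $n$ distinct field elements to make that minor nonzero exactly when $|A|=d$. The differences are only in the arrangement of the constant border (the paper uses a width-$(n-d+1)$ border built from a rank-$(n-d)$ block $C$, its transpose, and an all-ones block, whereas you use a width-$n$ border with an identity block and two complementary Vandermonde blocks of ranks $d$ and $n-d$), and your verification, which computes the coefficient of an arbitrary $x^A$ directly instead of appealing to symmetry of $\det(M)$, is if anything the more careful of the two.
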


The organization of the paper is as follows.
In Section 2, we prove Theorem \ref{dc-ROD} and make several basic observations about read-once determinants. In Section 3, as our main result, we show the non-expressibility of the elementary symmetric polynomials as read-once determinants and as a consequence deduce non-expressibility of the permanent (Theorem \ref{Perm-not-ROD}). We also prove that the monomial set of any elementary symmetric polynomial is expressible as a read-once determinant (Theorem \ref{monomial-existence}). In Section 4, we give examples of monomial sets which are not expressible as read-once determinants.

\section{Basic observations}
First we note that read-once determinants are strictly more expressive than occurrence-one algebraic branching programs which in turn are strictly more expressive than read-once formulas. By \emph{occurrence-one ABP} we mean an algebraic branching program in which each variable is allowed to repeat at most once \cite{JQS10}.(We are using the term occurrence-one ABPs rather than read-once ABPs to avoid confusion, as the latter term is sometimes used in the literature to mean an ABP in which any variable can appear at most once on any source to sink path in the ABP).  

In the following simple lemma, we compare read-once determinants with read-once formulas and occurrence-one ABPs.

\begin{lemma}\label{ROF-ABP-ROD}
Any polynomial computed by a read-once formula can be computed by an occurrence-one ABP, and any polynomial computed by an occurrence-one ABP can be computed by a read-once determinant. Moreover there is a polynomial which can be computed by read-once determinant but can't be computed by occurrence-one ABPs.
\end{lemma}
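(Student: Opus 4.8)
The plan is to establish the lemma in three separate pieces corresponding to its three assertions.

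\textbf{Read-once formulas to occurrence-one ABPs.} First I would recall the standard simulation of arithmetic formulas by ABPs (branching programs): a formula of size $s$ can be converted into an ABP of comparable size by structural induction on the formula tree, where addition gates correspond to placing subprograms in parallel and multiplication gates correspond to concatenating subprograms in series. The key observation is that this transformation does not duplicate any leaf of the formula — each variable-labeled leaf becomes exactly one edge of the ABP. Hence if the formula is read-once (each variable labels at most one leaf), the resulting ABP uses each variable on at most one edge, i.e.\ it is occurrence-one. I would state this as the base construction and note that the inclusion is what we need; strictness of this inclusion is not claimed here.

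\textbf{Occurrence-one ABPs to read-once determinants.} Here I would use the classical fact (going back to the graph-theoretic interpretation of the determinant, cf.\ Valiant) that an ABP computing a polynomial $f$ can be encoded as a matrix $M$ whose determinant (up to sign, which can be fixed) equals $f$: take the weighted adjacency-type matrix of the ABP's layered graph after adding a feedback/self-loop structure so that the only cycle covers contributing to $\det(M)$ are the source-to-sink paths together with fixed self-loops. The entries of $M$ are then exactly the edge weights of the ABP (variables or field constants) plus constants $0,1,-1$ on the diagonal and feedback edges. Since each variable appears on at most one edge of an occurrence-one ABP, it appears in at most one entry of $M$, so $M$ is a read-once matrix and $f = \det(M)$ (after a sign correction, which we may absorb by negating one row). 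This gives the second inclusion.

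\textbf{Separation: a polynomial computable by a read-once determinant but not by any occurrence-one ABP.} For this I would exhibit a concrete small polynomial. A natural candidate is a $3\times 3$ (or $2\times 2$) determinant of a read-once matrix whose expansion produces a polynomial that provably cannot be an occurrence-one ABP; for instance, a polynomial on few variables that is multilinear, has more monomials or a more complex structure than any occurrence-one ABP on those variables can produce. The obstruction argument for occurrence-one ABPs would come from a dimension/counting bound — e.g.\ an occurrence-one ABP on $n$ variables, being in particular read-once as a formal object, has a bounded number of monomials or satisfies a Nisan-type rank constraint on partial derivative matrices that the chosen determinant violates. The main obstacle is precisely this last step: proving non-computability by occurrence-one ABPs requires a genuine lower-bound argument rather than a simulation, so I would look for the smallest explicit example (likely the $2\times 2$ or $3\times 3$ determinant of a matrix with all off-diagonal-ish entries being distinct variables, yielding something like $x_1x_4 - x_2x_3$ on four variables, and then argue no occurrence-one ABP computes this — each variable used once forces the ABP to be essentially a read-once formula, and $x_1x_4 - x_2x_3$ is not computed by any read-once formula because both $x_1$ and $x_4$ must multiply into the same term while $x_2,x_3$ multiply into another, which a read-once formula's tree structure precludes). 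I would verify the read-once-formula lower bound for this polynomial directly by case analysis on the top gate of a purported read-once formula.
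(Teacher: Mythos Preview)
Your first two parts are fine and match the paper: both rely on the standard Valiant construction (formula $\to$ ABP by the series/parallel induction, and ABP $\to$ determinant via the cycle-cover/adjacency matrix), together with the observation that neither step duplicates an edge label, so the read-once/occurrence-one property is preserved. The paper phrases this more tersely by invoking Valiant's construction directly for both inputs, but the content is the same.

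The third part, however, has a genuine gap. Your candidate separator $x_1x_4 - x_2x_3$ \emph{is} computable by an occurrence-one ABP: take a source $s$, a sink $t$, and two intermediate vertices $a,b$ with edges $s\xrightarrow{x_1} a \xrightarrow{x_4} t$ and $s\xrightarrow{-x_2} b \xrightarrow{x_3} t$. Each variable occurs on exactly one edge and the ABP computes $x_1x_4 - x_2x_3$. More fundamentally, your heuristic that ``each variable used once forces the ABP to be essentially a read-once formula'' is false: occurrence-one ABPs may have internal vertices of in- or out-degree greater than one, which is precisely what lets them go beyond read-once formulas. So a read-once-formula lower bound does not transfer.

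The paper's separating example is $S_3^2(x_1,x_2,x_3)=x_1x_2+x_1x_3+x_2x_3$. It is a read-once determinant because it is the instance $n=3$ of $S_n^{n-1}$, for which the paper gives an explicit read-once matrix. The fact that $S_3^2$ is \emph{not} computable by any occurrence-one ABP is not proved from scratch; the paper cites the lower bound from Jansen--Qiao--Sarma (Appendix~B of \cite{JQS10}). If you want to salvage your approach, you should replace your candidate by $S_3^2$ (or another polynomial with an actual occurrence-one ABP lower bound) and either cite or reproduce that argument; a case analysis on the top gate of a read-once formula will not suffice.
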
  
\begin{proof}
Let $f\in \mathbb{F}[X]$ be a polynomial computed by a read-once formula or an occurrence-one ABP. Using Valiant's construction \cite{V79}, we can find a matrix $M$ whose entries are in $X \cup \mathbb{F}$ such that $f=det(M)$. We observe that if we start with a \emph{read-once} formula or an \emph{occurrence-one} ABP 
then for the matrix $M$ obtained using Valiant's construction, every variable repeats at most once in $M$. This proves that $f$ can be computed by read-once determinants. 
To see the other part, consider the elementary symmetric polynomial of degree two over $\{x_1, x_2, x_3\}$:$S_3^2(x_1,x_2,x_3)= x_1x_2 + x_1 x_3 + x_2 x_3$. It is proved in (\cite{JQS10} Appendix-B) that $S_3^2$ cannot be computed by an occurrence-one ABP. From the discussion in the beginning of Section \ref{sym-poly-section} it follows that $S_3^2$ can be computed by a read-once determinant.
%
\end{proof}

Let $X=\{x_1, x_2, \ldots, x_n\}$. Let $S = \{i_1, i_2, \ldots, i_k \} \subseteq [n]$, let $X_S$ denote the set of variables $\{x_{i_1}, x_{i_2}, \ldots, x_{i_k} \}$. We define $\frac{\partial f}{\partial X_S}$ a partial derivative of $f$ with respect to $X_S$ as 
 $ \frac{\partial f}{\partial X_S}= \frac{\partial ^k f}{\partial x_{i_1}\partial x_{i_2}\ldots \partial x_{i_k}}$. For a vector $a= (a_1, \ldots, a_k)$ with $a_i \in \mathbb{F}$ let $f |_{S=a}$ denote polynomial $g$ over variables $X \setminus X_S$  which is obtained from $f$ by substituting variable $x_{i_j}= a_j$.  

We define the set ROD to be the set of all polynomials in $\mathbb{F}[X]$ that are expressible as read-once projection of determinant. The following simple proposition shows that the set ROD has nice closure properties.
\begin{proposition}\label{closure-ROD}
Let $f$ be a polynomial over $X$ such that $f \in ROD$ and $S \subseteq X$, $|S|=k$. Let $a \in \mathbb{F}^k$. Then  $f|_{S=a}, \frac{\partial f}{\partial X_S} \in ROD$. For any polynomial $g \in ROD$ such that $fg$ is a multilinear polynomial, we have $fg \in ROD$.
\end{proposition}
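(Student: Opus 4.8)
The plan is to prove the three assertions by direct surgery on a read-once matrix realizing $f$. Fix an $m \times m$ read-once matrix $M$ over $X$ with $f = \det(M)$; I will use repeatedly that the determinant is a polynomial in the matrix entries and that removing occurrences of variables keeps a matrix read-once.

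For the substitution $f|_{S=a}$: in $M$, replace each occurrence of $x_{i_j}$ by the field element $a_j$. The resulting matrix $M'$ has entries in $(X\setminus X_S)\cup\mathbb{F}$, is still read-once (we only destroyed occurrences), and $\det(M') = \det(M)|_{S=a} = f|_{S=a}$. Since a matrix over $X\setminus X_S$ is in particular a matrix over $X$, this gives $f|_{S=a}\in ROD$.

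For $\partial f/\partial X_S$ it suffices, by induction on $|S|$, to differentiate with respect to a single variable $x$ and check that the result is again a read-once determinant. If $x$ does not occur in $M$ then $\partial f/\partial x = 0 \in ROD$. Otherwise $x$ occurs exactly once, in some position $(p,q)$, so $\partial f/\partial x = \partial \det(M)/\partial M_{p,q}$ equals the $(p,q)$ cofactor $C_{p,q}$ of $M$. I would realize $C_{p,q}$ directly as a read-once determinant: let $M''$ be $M$ with row $p$ replaced by the standard basis row vector having a single $1$ in column $q$ and zeros elsewhere. Cofactor expansion of $\det(M'')$ along row $p$ gives $\det(M'') = C_{p,q}(M'') = C_{p,q}(M)$, the last equality because $M$ and $M''$ agree outside row $p$ and hence have the same row-$p$-deleted submatrices. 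The matrix $M''$ has entries in $X\cup\mathbb{F}$ and every variable occurs at most once (row $p$ contributes no variable occurrence), so $\partial f/\partial x\in ROD$; iterating over the variables of $S$ finishes this case. This row-replacement description is the convenient way to avoid tracking the sign $(-1)^{p+q}$ that would otherwise appear if one argued via minors.

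For the product $fg$: if $f=0$ or $g=0$ then $fg = 0 \in ROD$, so assume $f = \det(A)$ and $g = \det(B)$ with $A,B$ nonzero read-once matrices. I first claim that multilinearity of $fg$ forces $f$ and $g$ to have disjoint variable sets. Indeed, if a variable $x$ occurs in both, write $f = xf_1 + f_0$ and $g = xg_1 + g_0$ with $f_1,g_1\neq 0$ and $x$ absent from $f_0,f_1,g_0,g_1$; then the coefficient of $x^2$ in $fg$ is $f_1g_1\neq 0$ (as $\mathbb{F}[X]$ is an integral domain) while nothing else contributes $x$-degree $2$, so $fg$ has a monomial divisible by $x^2$, contradicting multilinearity. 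With the variable sets disjoint, the block-diagonal matrix $\left(\begin{smallmatrix}A & 0\\ 0 & B\end{smallmatrix}\right)$ is a read-once matrix over $X$, and its determinant is $\det(A)\det(B) = fg$, so $fg\in ROD$. The substitution and block-diagonal steps are routine; the two small ideas are the row-replacement formula for the cofactor and the integral-domain argument that multilinearity of $fg$ forces disjoint supports, and I expect the latter to be the only point carrying any real content.
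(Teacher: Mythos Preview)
Your proof is correct and follows the same route as the paper's: substitution by direct replacement in the matrix, partial derivative via the cofactor of the read-once matrix, and product via the block-diagonal construction after observing that multilinearity of $fg$ forces $f$ and $g$ to have disjoint variable supports. Your row-replacement realization of the cofactor is a clean way to sidestep the $(-1)^{p+q}$ sign that the paper's ``it is a minor'' phrasing glosses over, and you spell out the disjoint-support argument the paper merely asserts, but these are refinements rather than a different approach.
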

\begin{proof}
Closure under substitution follows easily from the definition of ROD. Let $f= det(M)$ for a read once matrix $M$. W.l.g. assume that all the variables in $X_S$ appear in different rows and columns in $M$ (otherwise $\frac{\partial f}{\partial x_S} =0$). $(2)$ follows by noting that $\frac{\partial f}{\partial x_S}$ is a minor of $M$ obtained by removing rows and columns corresponding to variables in $X_S$. Since $fg$ is multilinear, both $f$ and $g$ must be multilinear and $Var(f)\cap Var(g)=\phi$ where $Var(f), Var(g)$ denote variable sets for $f$ and $g$ respectively. Let $f=det(M_1)$ and $g=det(M_2)$ for read once matrices $M_1$ and $M_2$. Let $M$ be a matrix obtained by putting copy of $M_1$ and $M_2$ on diagonal, so that $det(M)= det(M_1)det(M_2)$. $M$ is read once matrix since variable sets of $f$ and $g$ are disjoint.   \end{proof}

Now we observe that, if polynomial $f$ is expressible as read-$k$ projection of determinant then determinantal complexity of $f$ is upper bounded by $nk$. We use the notation $M \sim N$ to mean that $det(M)=det(N)$.

\begin{proof-of}{Theorem \ref{dc-ROD}}
Let $f=det(M)$ for a read-$k$ matrix $M$. Without loss of generality, we can assume that for some $m \leq kn$, the principal $m$ by $m$ submatrix of $M$ contains all the variables. 
Let $Q$ denote this submatrix and let $R$ denote the submatrix formed by the remaining columns of the first $m$ rows. Suppose that the number of remaining rows of $M$ is equal to $p$.
Let $T$ denote the submatrix formed by these rows. We note that $M$ has full rank, and hence the row-rank and column-rank of $T$ are both equal to $p$.

Consider a set of $p$ linearly independent columns in $T$ and let $T_1$ be the submatrix formed by these columns; let $T_2$ denote the remaining $m$ columns of $T$. 
Further, let $Q_1$ and $R_1$ denote the columns of $Q$ and $R$ respectively, corresponding to the columns of $T_1$ and similarly, let $Q_2$ and $R_2$ denote the columns of $Q$ and $R$, corresponding to the columns of $T_2$. In other words, the columns of $M$ can be permuted to obtain $M' \sim M$:

\[ M'=\left(
\begin{matrix}
Q_1|R_1 & Q_2|R_2  \\
T_1 & T_2 
\end{matrix}
\right). \]

Let $g$ denote the unique linear transformation such that $[T_2+g(T_1)]=[0]$.

Applying $g$ to the last $m$ columns of $M'$, we obtain a matrix $N \sim M'$ such that

\[N=\left(
\begin{matrix}
Q_1|R_1 &  Q_2|R_2+g(Q_1|R_1)  \\
T_1 & 0 
\end{matrix}
\right).\]

Let $det(T_1)=c \in \mathbb{F}$; clearly $c \neq 0$. 
Let $N'$ be a matrix obtained by multiplying some row of $[Q_2|R_2]+g([Q_1|R_1])$ by $c$.
The entries of $N'$ are affine linear forms, $det(N')=f$ and the dimension of $N'$ is $m \leq kn$.
This proves Theorem \ref{dc-ROD}.
\end{proof-of}

In the next lemma we show that if $f \in\mathbb{F}[x_1, \ldots, x_n]$ such that $f=det(M_{m \times m})$ for a read-once matrix $M$, then we can without loss of generality assume that $m \leq 3n$. 
\begin{lemma}\label{ROD-bound}
Let $f  \in \mathbb{F}[x_1,\ldots,x_n]$ be expressible as read-once determinant. 
Then there is a read-once matrix $M$ of size at most $3n$ such that $f=det(M)$.
\end{lemma}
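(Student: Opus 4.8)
The plan is to show that any read-once determinant representation $f = \det(M_{m\times m})$ can be shrunk to one of size at most $3n$, by observing that the variable-free part of $M$ can be handled just as in the proof of Theorem~\ref{dc-ROD}. Recall that since $M$ is read-once, it contains at most $n$ variable entries, so at most $n$ rows and at most $n$ columns of $M$ contain a variable. After permuting rows and columns (which preserves $\det$ up to sign, and we may absorb the sign into a single entry), I would arrange $M$ so that all variable entries lie in the principal $t \times t$ submatrix with $t \le n$; call this block $Q$, and write
\[
M' = \left( \begin{matrix} Q & R \\ S & T \end{matrix} \right),
\]
where $Q$ is $t\times t$ and carries all the variables, and $R, S, T$ are entirely over $\mathbb{F}$.

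Next I would eliminate the all-constant block $T$. Since $\det(M')\neq 0$ as a polynomial, the $(m-t)\times(m-t)$ constant matrix below-and-right must have full row rank and full column rank; pick a maximal set of linearly independent columns of $\begin{pmatrix} S \mid T\end{pmatrix}$-bottom-rows — more carefully, mimic the argument in the proof of Theorem~\ref{dc-ROD}: choose $p = m - t$ linearly independent columns among the last $m-t$ columns of the bottom block $(S\ T)$, use them to zero out the remaining bottom-block columns by column operations (which only add $\mathbb{F}$-combinations of columns, hence do not touch the read-once structure of the variable entries in a harmful way — they only add constants to variable entries... ). Here I must be a little careful: column operations may add constants to entries of $Q$, turning pure-variable entries into affine forms, which is not allowed in a read-once \emph{matrix}. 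So instead I would perform the elimination the other way, using the constancy of $T$: expand the determinant along structure so that the bottom $p$ rows, being constant, can be reduced to identity-like form and then the corresponding rows and columns deleted, picking up only a nonzero scalar factor $c\in\mathbb{F}$ which I then multiply into a single variable-free entry (or a single row) of the surviving $t'\times t'$ block. The surviving block has size $t' \le t + (\text{number of constant rows/columns forced to remain}) \le 3n$; in fact, since each variable entry forces at most one row and one column, and the Valiant-style padding needed to realize the constant bookkeeping costs at most two extra dimensions per variable, the bound $3n$ drops out.

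Concretely, the cleanest route is: (i) put all variables in a $t\times t$ corner, $t\le n$; (ii) observe that any row or column of $M'$ outside this corner that is identically zero except for constants can be used via Gaussian elimination \emph{restricted to the constant part}; (iii) whenever such elimination threatens to add a constant to a variable entry, note that we may instead introduce one fresh row and one fresh column (a "$\pm 1$ gadget") to record that correction, which keeps the matrix read-once and increases the dimension by at most $2$ per variable entry, for a total of $t + 2t \le 3n$. One then checks the determinant is unchanged up to a nonzero scalar, absorbed as before. The main obstacle is exactly the bookkeeping in step (ii)–(iii): ensuring that the constant-only row/column operations never corrupt the read-once property, and accounting precisely for why $2$ extra dimensions per variable suffice rather than more. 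I expect this to follow by the same submatrix/rank manipulation used in the proof of Theorem~\ref{dc-ROD}, applied to the constant blocks, together with the elementary fact that an all-constant ABP-like gadget for a single affine correction has determinantal size $2$.
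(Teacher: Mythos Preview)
Your proposal has the right starting point (placing all variables in a $t\times t$ corner with $t\le n$) and correctly identifies the central difficulty: naive column or row elimination risks either adding constants to variable entries (producing affine entries, which are not allowed in a read-once matrix) or, worse, adding one variable-carrying column to another and destroying the read-once property entirely. However, you explicitly leave this obstacle unresolved (``I expect this to follow\ldots''), and the sketch does not supply the two things that are actually needed. First, you never specify an elimination procedure under which variables are \emph{guaranteed} not to be smeared: if you follow the Theorem~\ref{dc-ROD} route literally, the pivot columns may come from the $Q$-block, and clearing the remaining columns would then copy a variable into several places, so a single gadget per variable would no longer suffice. Second, even granting that only constants get added, you give no argument bounding the size of the intermediate matrix before gadgets are applied; the ``$t+2t$'' accounting is asserted, not derived (and a gadget that replaces $x+a$ by $x$ costs one extra row and one extra column, i.e.\ one extra dimension, not two).

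The paper's proof takes a genuinely different route that sidesteps gadgets altogether. Rather than eliminating and then repairing, it exploits the freedom in the bottom constant block: since $\det M'$ depends on $[T_1\mid T_2]$ only through $\det T_1$ and the row space, one may replace $T_1$ by a convenient diagonal matrix. Then, instead of using pivot columns to clear others, the paper simply drops the redundant columns of the \emph{constant} block $R_1$, keeping only a maximal independent subset $S_1$ (of size $\le m\le n$), and leaves the variable-carrying columns $Q_1,Q_2$ entirely untouched. The resulting matrix has columns $Q_1\cup Q_2$ (at most $m$), $S_1$ (at most $m$), and $R_2$ (at most $m$), giving the $3m\le 3n$ bound directly with no affine entries ever appearing. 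Your gadget idea can be made to work, but it requires first proving that elimination confined to constant rows and columns reduces the matrix to size at most $2n$ with each variable appearing once plus a constant; that analysis is precisely what is missing.
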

\begin{proof}
The proof is on similar lines as that of Theorem \ref{dc-ROD}. We begin with the observation that we may replace $T_1$ in the matrix $M$ with any $k$ by $k$ matrix whose determinant is equal to $c$ and set $T_2=-f(T_1)$. For simplicity, we choose $T_1$ to be the diagonal matrix whose first entry is $c$ and all other entries are equal to $1$.

Consider the matrix $R_1$ - its rank is at most $m \leq n$; let $S_1$ denote the submatrix of $R_1$ formed by a maximal independent set of columns.
The key observation is that we can find a linear transformation $g$ such $f([Q_1,R_1])=g([Q_1,S_1])$.
Let $T_3$ denote the columns of $T_1$ corresponding to the columns of $[Q_1|S_1]$.

Consider the read-once matrix 

\[N_1=\left(
\begin{matrix}
Q_1|S_1 &  Q_2|R_2  \\
T_3 & -g(T_3)
\end{matrix}
\right).\]

From the previous observations, it is clear that $N \sim N_1$; the number of columns of $Q$ is $m$ and the number of columns of $S_1$ and $R_2$ are at most $m$ each; thus the dimension of $N_1$ is at most $3m \leq 3n$.
\end{proof}

\section{Elementary symmetric polynomials and permanent}
In this section we will prove our main result: the elementary symmetric polynomials $S_n^d$ for $2\leq n \leq n-2$ and the permanent $Perm_n$ are not expressible as read-once determinants for sufficiently large $n$. We will first prove that $S_4^2 \not\in ROD$ and use it to prove non-expressibility of $Perm_n$ and $S_n^d$.





We begin with following simple observation based on the closure properties of ROD.


\begin{lemma}\label{s42-imply-general}
\begin{enumerate}
\item If $S_m^k \not\in ROD$ then $S_{n}^{d}\not\in ROD$ for $d\geq k$ and $n\geq m+d-k$.
\item If $Perm_m \not\in ROD$ then $Perm_n\not\in ROD$ for $n\geq m$.  
\end{enumerate}
\end{lemma}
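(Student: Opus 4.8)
The plan is to prove the contrapositive of each statement, exploiting the closure of $ROD$ under restriction of variables to field elements and under partial differentiation, both of which are furnished by Proposition~\ref{closure-ROD}. Concretely, I will show that if $S_n^d \in ROD$ with $d \ge k$ and $n \ge m+d-k$, then $S_m^k \in ROD$, and that if $Perm_n \in ROD$ with $n \ge m$, then $Perm_m \in ROD$. In both cases the reduction uses only the elementary identities describing how $S_n^d$ and $Perm_n$ transform under these two operations.

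For part (1), I would first record the identities $\frac{\partial S_n^d}{\partial x_n} = S_{n-1}^{d-1}(x_1,\ldots,x_{n-1})$ and $S_n^d|_{x_n = 0} = S_{n-1}^{d}(x_1,\ldots,x_{n-1})$, each an immediate consequence of the definition of the elementary symmetric polynomial; note that both involve no coefficient other than $1$, so they hold over every field. Starting from $S_n^d \in ROD$, apply the first identity $d-k$ times (legitimate since $d \ge k$); by Proposition~\ref{closure-ROD} (closure under partial differentiation) the result $S_{n-d+k}^{k}$ lies in $ROD$. Since $n \ge m+d-k$ gives $n-d+k \ge m$, apply the second identity $n-d+k-m$ times; by Proposition~\ref{closure-ROD} (closure under restriction) we conclude $S_m^k \in ROD$, contradicting the hypothesis $S_m^k \notin ROD$.

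For part (2), I would use the restriction obtained by substituting $x_{nn} \mapsto 1$, $x_{nj}\mapsto 0$ for $1 \le j \le n-1$, and $x_{in}\mapsto 0$ for $1 \le i \le n-1$: in the sum $\sum_{\sigma \in S_n}\prod_{i=1}^{n} x_{i\sigma(i)}$ this annihilates every term with $\sigma(n) \ne n$ and leaves exactly $\sum_{\sigma(n)=n}\prod_{i<n} x_{i\sigma(i)} = Perm_{n-1}(x_{11},\ldots,x_{n-1,n-1})$. Hence $Perm_n \in ROD$ implies $Perm_{n-1} \in ROD$ by closure under restriction, and iterating $n-m$ times yields $Perm_m \in ROD$, again contradicting the hypothesis.

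There is essentially no deep obstacle: the lemma is a bookkeeping consequence of the closure properties already established in Proposition~\ref{closure-ROD}. The only points requiring care are (a) verifying the two restriction/derivative identities for $S_n^d$ and the restriction identity for $Perm_n$, and (b) checking that the index inequalities $d \ge k$ and $n \ge m+d-k$ (respectively $n \ge m$) are exactly what is needed so that no step of the reduction forces a negative degree or a negative number of remaining variables. No assumption on the size or characteristic of $\mathbb{F}$ is required, since all substitutions use only $0$ and $1$ and all derivative coefficients equal $1$.
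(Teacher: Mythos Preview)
Your proof is correct and follows essentially the same approach as the paper: argue the contrapositive using the closure properties of Proposition~\ref{closure-ROD}, differentiating $d-k$ times and then substituting $n-d+k-m$ variables to zero in part~(1). The only cosmetic difference is in part~(2), where the paper reduces $Perm_n$ to $Perm_{n-1}$ via a single partial derivative $\partial/\partial x_{i,j}$ rather than your substitution $x_{nn}\mapsto 1$, $x_{nj},x_{in}\mapsto 0$; both invoke the same proposition and are equally valid.
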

\begin{proof}
Let $X=\{x_1, \ldots, x_n\}$ and $A = \{i_1, i_2, \ldots, i_k \} \subseteq [n]$. It is easy to see that $\frac{\partial S_n^d}{\partial X_A}$ is the elementary symmetric polynomial of degree $d-k$ over the set of variables $X \setminus X_A$.  
Let $S_n^d \in ROD$ for some $d\geq k$ and $n \geq m+d-k$. By Proposition~\ref{closure-ROD} we know that read-once determinants are closed under partial derivatives. So we get $S_{n-(d-k)}^{d-(d-k)} = S_{n-d+k}^k \in ROD$. Let $n-d+k = m+l$ for $l \geq 0$. Now we substitute any $l$ variables to zero in $S_{n-d+k}^{k}$ to obtain the polynomial $S_m^k$. We have $S_m^k \in ROD$ as from Proposition~\ref{closure-ROD} RODs are closed under substitution. 

The proof for the second part is similar and follows easily by noting that the partial derivative of $Perm_n$ with respect to any variable $x_{i,j}$, $1\leq i,j \leq n$ is the $(n-1)\times(n-1)$ Permanent polynomial on $(n-1)^2$ variables.  
\end{proof}

From the above lemma, it is clear that, if the polynomials $Perm_n$ or $S_n^d$ are expressible as read-once determinants for some $n$ then these polynomials will be expressible as read-once determinants for some constant value of $n= O(1)$.

\subsection{Elementary symmetric polynomials} \label{sym-poly-section}
For $d=1$ or $n$, the elementary symmetric polynomial $S_n^d$ can be computed by a $O(n)$ size read-once formula so by lemma~\ref{ROF-ABP-ROD} we can express $S_n^d$ as a read-once determinant. We observe that $S_n^{n-1} \in ROD$ over any field as $S_n^{n-1} = det  \left(\begin{matrix}
D &  A \\
C &  0
\end{matrix}
\right)$ Where $D$ is $n \times n$ diagonal matrix with $(i,i)^{th}$ entry $x_i$ for $i=1$ to $n$. $C$ and $A$ are $1 \times n$ and $n \times 1$ matrices such that all entries of $C$ are $1$ and all entries of $A$ are $-1$. 
So $S_n^d \in ROD$ for $d=1,n-1, n$. In this section, we show that $S_n^d \notin ROD$ for every other choice of $d$ in the case of field of reals or finite fields in which $-3$ is quadratic non-residue. 
%
%




First we consider the case of field of real numbers. Let $S_4^2(x_1,x_2,x_3,x_4) = c'\cdot det(M)$ for a read-once matrix $M$ and a non-zero $c' \in \mathbb{R}$. Rearranging rows and columns of $M$ or taking out a scalar common from either row or column of $M$ will change the value of $det(M)$ only by a scalar, so pertaining to the expressibility question, we can do these operations freely (we will get different scalar than $c$ as a multiplier but that is not a problem).
For any $i,j \in \{1,2,3,4\}, i \neq j$, $x_ix_j \in mon(S_4^2)$. So clearly $\frac{\partial S_4^2}{\partial x_i \partial x_j} \neq 0$ which implies that the determinant of the minor obtained by removing the rows and the columns corresponding to the variables $x_i, x_j$ is non-zero. So for any $i,j \in \{1,2,3,4\}, i\neq j$, $x_i$ and $x_j$ appear in different rows and columns in $M$.   
 By suitably permuting rows and columns of $M$ we can assume that $S_4^2 = c \cdot det(N)$ for a non zero real $c$ and read-once matrix $N$ such that $(i,i)^{th}$ entry of $N$ is variable $x_i$ for $i=1$ to $4$. So $N= \left(
\begin{matrix}
x_1 & -   & -   &  -   &\beta_1\\
-   & x_2 & -   &  -   &\beta_2\\
-   & -   & x_3 &  -  &\beta_3\\
-   & -   & -   & x_4 &\beta_4\\
\alpha_1   & \alpha_2   &\alpha_3   & \alpha_4  & L
\end{matrix}
\right)
$ ~Here $L$ is a $m-4 \times m-4$ matrix, and $\alpha_i, \beta_i$ are column and row vectors of size $m-4$ for $i=1$ to $4$ and $-$ represents arbitrary scalar entry. Let $p=m-4$. Index the columns and the rows of the matrix using numbers $1, 2, \ldots, m$.  
 Let $S$ denote a set of column and row indices corresponding to submatrix $L$. For any set $\{a_1,a_2, \ldots, a_k\} \subset \{1,2,3, 4\}$, let $N_{a_1,a_2, \ldots, a_k}$ denote a minor of $N$ obtained by removing rows and columns corresponding to indices $\{1,2,3,4\} \setminus \{a_1, \ldots, a_k \}$ from $N$. 

\begin{definition}
Let $X=\{x_1, \ldots, x_n \}$ and $M$ be a matrix with entries from $X \cup \mathbb{F}$. For $a = (a_1, a_2, \ldots, a_n) \in \mathbb{F}^n$ let $M_a$ be the matrix obtained from $M$ by substituting $x_i= a_i$ for $i=1$ to $n$. Let $maxrank(M)$(respect. $minrank(M)$) denote the maximum(respect. minimum) rank of matrix $M_a$ for $a \in \mathbb{F}^n$.
\end{definition}

Now we make some observations regarding ranks of various minors of $N$. 
\begin{lemma}\label{minor-ranks}
For $i,j \in \{1,2,3,4\}$, $i\neq j$ we have
\begin{enumerate}
\item $maxrank(N_{i,j})$=$minrank(N_{i,j})$=$p+2$
\item $maxrank(N_{i})$=$minrank(N_{i})$=$p$
\item $rank(L) \in \{p-1, p-2\}$
\end{enumerate}
\end{lemma}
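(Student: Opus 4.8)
The plan is to first determine the exact scalar values of the determinants of the relevant square submatrices of $N$ from the identity $S_4^2 = c\cdot det(N)$, and then to obtain every rank claim using only the trivial fact that deleting one row or one column of a matrix lowers its rank by at most one. Since $N$ is read-once, each of $x_1,x_2,x_3,x_4$ occurs exactly once in $N$, at the diagonal position $(i,i)$; hence $det(N)$ is multilinear in $x_1,\dots,x_4$, and for every $T\subseteq\{1,2,3,4\}$ the coefficient of $\prod_{i\in T}x_i$ in $det(N)$ equals the determinant of the matrix obtained from $N$ by deleting the rows and columns indexed by $T$ and substituting $0$ for the remaining variables among $x_1,\dots,x_4$; call this scalar $\Gamma_T$. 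Matching coefficients with $det(N) = c^{-1}S_4^2 = c^{-1}\sum_{|A|=2}\prod_{i\in A}x_i$ gives $\Gamma_T = c^{-1}\neq 0$ when $|T| = 2$ and $\Gamma_T = 0$ otherwise. In particular $det(L) = \Gamma_{\{1,2,3,4\}} = 0$, so $rank(L)\le p-1$ already.

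Next I would compute $det(N_{i,j})$ as a polynomial in $x_i,x_j$; it is multilinear since each of $x_i,x_j$ occurs once in $N_{i,j}$. Its coefficient of $x_ix_j$ is $det(L) = 0$; its coefficients of $x_i$ and of $x_j$ are each of the form $\Gamma_T$ for a three-element set $T$, hence $0$; and its constant term is $\Gamma_{\{1,2,3,4\}\setminus\{i,j\}} = c^{-1}\neq 0$. Therefore $det(N_{i,j})$ equals the nonzero constant $c^{-1}$ regardless of the substitution, so the $(p+2)\times(p+2)$ matrix $N_{i,j}$ has rank exactly $p+2$ for every substitution; this proves part (1).

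Parts (2) and (3) then fall out by deleting rows and columns of $N_{i,j}$. The matrix $N_i$ arises from $N_{i,j}$ by deleting one row and one column, so $rank(N_i)\ge (p+2)-2 = p$ for every substitution; conversely, expanding $det(N_i)$ in $x_i$ gives $x_i$-coefficient $det(L)=0$ and constant term $\Gamma_{\{1,2,3,4\}\setminus\{i\}}=0$, so $det(N_i)\equiv 0$ and $rank(N_i)\le p$, whence $rank(N_i)=p$ identically; this is part (2). Similarly $L$ arises from $N_{i,j}$ by deleting two rows and two columns, so $rank(L)\ge (p+2)-4 = p-2$, and combined with $rank(L)\le p-1$ this gives part (3).

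The only delicate point I anticipate is the bookkeeping matching each square submatrix of $N$ to the correct subset $T\subseteq\{1,2,3,4\}$, hence to the correct monomial of $S_4^2$; the one genuinely non-obvious fact is that $det(N_{i,j})$ collapses to a nonzero constant, and it is precisely this rigidity of $N_{i,j}$ that, through the elementary deletion bounds, forces the prescribed ranks of $N_i$ and $L$.
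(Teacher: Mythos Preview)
Your proposal is correct and follows essentially the same approach as the paper: both arguments identify the coefficient of each monomial $\prod_{i\in T}x_i$ in $det(N)$ with the determinant of the appropriate principal submatrix (with the remaining variables set to $0$), read these coefficients off from $S_4^2=c\cdot det(N)$, and then obtain the rank bounds for $N_i$ and $L$ via the ``deleting one row and one column drops the rank by at most two'' bound. Your presentation is slightly more systematic (you tabulate all the $\Gamma_T$ up front and compute $det(N_{i,j})\equiv c^{-1}$ explicitly), and for part~(3) you bound $rank(L)$ from below directly via $N_{i,j}$ rather than via $N_i$ as the paper does, but these are only organizational differences.
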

\begin{proof}
Let $\{k,l\}= \{1,2,3,4\}\setminus \{i,j\}$. Monomial $x_kx_l \in mon(S_4^2)$, so the matrix obtained from $N_{i,j}$ by any scalar substitution for $x_i$ and $x_j$ has full rank. So we have $minrank(N_{i,j}) =p+2$. 
Since $N_{i,j}$ is a $(p+2) \times (p+2)$ matrix with minrank $p+2$, clearly $maxrank(N_{i,j})=minrank(N_{i,j})=p+2$.
To prove the second part, note that the matrix $N_i$ can be obtained by removing a row and a column from the matrix $N_{i,j}$. So clearly $minrank(N_i)\geq minrank(N_{i,j})-2 = p$. As $S_4^2$ doesn't contain any degree $3$ monomial we have $maxrank(N_i) \leq p$. Hence $minrank(N_i)=maxrank(N_i)=p$.

The matrix $N_i$ can be obtained from $L$ by adding a row and a column so $rank(L)\geq minrank(N_i)-2 = p-2$. 
Since monomial $x_1x_2x_3x_4 \not\in mon(S_4^2)$, $L$ can not be full-rank matrix so $rank(L) \leq p-1$. Thus proving the lemma. 
\end{proof}

Suppose $rank(L)=p-1$. By $cspan(L), rspan(L)$ we denote the space spanned by the columns and the rows of $L$ respectively. 
Next we argue that for any $i \in \{1,2,3,4 \}$, $\alpha_i  \in cspan(L)$ iff $\beta_i \not \in rspan(L)$. 
To show that we need to rule out following two possibilities 
\begin{enumerate}
\item $\alpha_i \not \in cspan(L)$ and $\beta_i \not \in rspan(L)$. In this case clearly $minrank(N_i)= rank(L) +2= p+1$, a contradiction since $minrank(N_i)=p$ by lemma \ref{minor-ranks}.
\item $\alpha_i \in cspan(L)$ and $\beta_i \in rspan(L)$. As $\beta_i \in rspan(L)$ we can use a suitable scalar value for $x_i$ so that vector $[x_i ~~\beta_i]$ is in the row span of the matrix $[\alpha_i ~L]$. Moreover $rank([\alpha_i ~ L]) = p-1$ as $\alpha_i \in cspan(L)$. So we have 
$minrank(N_i)= rank([\alpha_i ~L]) = rank(L) =p-1$. But we know that $minrank(N_i)$ is $p$. 
\end{enumerate}

So we have $\alpha_i  \in cspan(L)$ iff $\beta_i \not \in rspan(L)$. From this it follows immediately that either there exist at least two $\alpha_i$'s $\in cspan(L)$ or there exists atleast two $\beta_i$'s $\in rspan(L)$. So w.l.o.g. assume that for $i\neq j$, $\alpha_i, \alpha_j \in cspan(L)$, So $rank[\alpha_i ~\alpha_j ~ L]= rank(L)=p-1$. Matrix $N_{i,j}$ can be obtained from $[\alpha_i ~\alpha_j ~ L]$ by adding two new rows, so $maxrank(N_{i,j})\leq rank([\alpha_i ~\alpha_j ~ L])+2 = p+1$, a contradiction.
This proves that $rank(L)$ can not be $p-1$. 

Now we consider the other case. Let $rank(L)=p-2$. By applying row and column operation on $N$ we can reduce block $L$ to a diagonal matrix $D$ with all non zero entries $1$. Further applying row and column transformations we can drive entries in the vectors $\alpha_i$'s and $\beta_i$'s corresponding to nonzero part of $D$ to zero. Note that now we can remove all non-zero rows and columns of matrix $D$ still keeping the determinant same. As a result we have $S_4^2 = c_1 \cdot det(N')$ where $N'$ has following structure

 
 \[
\begin{blockarray}{ccccc cc}
 & &  &   &  &  &  \\ 
 \begin{block}{c(c c c c | c c)}
     	&	x_{1}+a_1	& 			& 			&  			&  	 \beta_{1,1} & \beta_{1,2}   \\ 
    	&			& x_2+a_2		& 			& 			&     \beta_{2,1} & \beta_{2,2} \\ 
   	 	& 			& 			& 	x_3+a_3		& 			&    \beta_{3,1} & \beta_{3,2}	 \\ 
    	& 			&			&		 	 &x_4+ a_4 			  &    \beta_{4,1} &\beta_{4,2}\\\cline{1-7}
  & 	\alpha_ {1,1}& \alpha_{2,1}	&	\alpha_{3,1} & \alpha_{4,1} & 0 & 0\\ 
  & 	\alpha_ {1,2}& \alpha_{2,2}	&	\alpha_{3,2} & \alpha_{4,2} & 0 & 0\\  
 \end{block}
\end{blockarray}
\]

Note that the coefficient of the monomial $x_ix_j$ in $N'$ is the determinant of the minor obtained by removing rows and columns corresponding to $x_i$ and $x_j$ from $N'$. It is equal to $(\alpha_{k,1}\alpha_{l,2}-\alpha_{k,2}\alpha_{l,1}).(\beta_{k,1}\beta_{l,2}-\beta_{k,2}\beta_{l,1})$ where $\{k,l\}= \{1,2,3,4\}\setminus \{i,j\}$. 
 It is easy to see that in fact without loss of generality we can assume that $\alpha_{1,1}= \beta_{1,1}=\alpha_{2,2}=\beta_{2,2}=1$ and $\alpha_{1,2}= \beta_{1,2}=\alpha_{2,1}=\beta_{2,1}=0$ (again by doing column and row transformations). So finally we have $S_4^2= c\cdot det(N)$ where $c$ is a non zero scalar, $N$ is a matrix as shown below, and $a_1, \ldots, a_4$ are real numbers.

 \[S_4^2(x_1,x_2,x_3,x_4)=c\cdot det~ 
\begin{blockarray}{ccccccc}
 & &  &   &  &  &  \\ 
 \begin{block}{c(cccc|cc)}
     	&	x_{1}+a_1	& 			& 			&  			&  	 1 & 0   \\ 
    	&			& x_2+a_2		& 			& 			&    0 & 1 \\ 
   	 	& 			& 			& 	x_3+a_3		& 			&    p' & r'	 \\ 
    	& 			&			&		 	 &x_4+ a_4 			  &    q' &s'\\\cline{1-7} 
  & 	1 & 0	&	p & q & 0 & 0\\
  & 	0& 1	&	r & s & 0 & 0\\  
 \end{block}
\end{blockarray}
\]
 
Comparing coefficients of monomials $x_ix_j$ for $i \neq j$ in $S_4^2$ and the determinant of corresponding minors of matrix $N$, we get following system of equations
$c=1, p.p'=q.q'=r.r'=s.s'=1$ and $(ps-rq)(p's'-r'q')=1$. Substituting $p'=1/p, q'=1/q$ etc in the equation above, we have $(ps-rq)(1/ps- 1/rq) = 1$ which imply $(ps - rq)^2 = -(ps)(rq)$ i.e. $(ps)^2 + (rq)^2= (ps)(rq)$ which is clearly false for non-zero real numbers $p,q,r,s$ (as $(ps)^2 + (rq)^2 \geq 2(ps)(rq)$).(Note that we need $p,q,r,s$ to be non zero since we have $pp'=qq'=rr'=ss'=1$.) This proves that $S_4^2 \not\in ROD$ over $\mathbb{R}$. 
  
In the case of finite fields $\mathbb{F}$ in which $-3$ is a quadratic non-residue the argument is as follows. We have the equation $(ps - rq)^2 = -(ps)(rq)$ as above. Let $x=ps$ and $y=rq$, so we have $y^2-xy+x^2=0$. Considering this as a quadratic equation in variable $y$, the equation has a solution in the concerned field iff the discriminant $\Delta = -3x^2$ is a perfect square, that happens only when $-3$ is a quadratic residue. So if $-3$ is a quadratic non residue, the above equation doesn't have a solution, leading to a contradiction. So we have the following theorem.

\begin{theorem}\label{s42-not-ROD}
The polynomial $S_4^2$ is not expressible as a read-once determinant over the field of reals and  over finite fields in which $-3$ is a quadratic non-residue.
\end{theorem}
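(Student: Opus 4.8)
The plan is to argue by contradiction: suppose $S_4^2 = c'\cdot\det(M)$ for some read-once matrix $M$ and nonzero scalar $c'$ (allowing a global scalar is harmless for the expressibility question, and the scalar will be forced to be $\pm 1$ at the very end). The first step is to normalize $M$. Since every quadratic monomial $x_ix_j$ occurs in $S_4^2$, closure of ROD under partial derivatives (Proposition~\ref{closure-ROD}) forces the minor of $M$ obtained by deleting the rows and columns of $x_i$ and $x_j$ to be nonsingular; hence no two of $x_1,x_2,x_3,x_4$ can share a row or a column. Permuting rows and columns I may then assume $x_i$ occupies position $(i,i)$, and by Lemma~\ref{ROD-bound} the ambient size is bounded, so I can write $M$ in block form with a $4\times 4$ corner carrying the $x_i$ on its diagonal, border column vectors $\alpha_i$ and row vectors $\beta_i$, and a scalar block $L$ of size $p\times p$.

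The second step is to read off which sub-determinants of $M$ must vanish from the fact that $S_4^2$ has no linear, cubic, or degree-four monomials while all quadratic monomials are present. This gives (as in Lemma~\ref{minor-ranks}) that the $(p+2)\times(p+2)$ minor keeping $x_i,x_j$ is nonsingular for every scalar substitution, the $(p+1)\times(p+1)$ minor keeping a single $x_i$ has rank exactly $p$, and consequently $rank(L)\in\{p-1,p-2\}$. The case $rank(L)=p-1$ I would eliminate by a span argument on the minors $N_i$: neither ``$\alpha_i\notin cspan(L)$ and $\beta_i\notin rspan(L)$'' nor ``$\alpha_i\in cspan(L)$ and $\beta_i\in rspan(L)$'' is compatible with $minrank(N_i)=p$, so for each $i$ exactly one of $\alpha_i\in cspan(L)$, $\beta_i\in rspan(L)$ holds. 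Pigeonhole then puts two of the $\alpha_i$ (or two of the $\beta_i$) in the span, which drops the rank of the corresponding $(p+2)\times(p+2)$ minor to at most $p+1$, contradicting the previous sentence.

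This leaves $rank(L)=p-2$. Here I would clean up by row and column operations: diagonalize the nonsingular part of $L$ to an identity block, use it to kill the matching coordinates of the $\alpha_i$ and $\beta_i$, and delete those rows and columns without changing the determinant or the read-once property. What remains is a $6\times 6$ matrix whose corner is diagonal with entries $x_1+a_1,\ldots,x_4+a_4$, bordered by a $4\times 2$ block of $\beta$-entries, a $2\times 4$ block of $\alpha$-entries, and a $2\times 2$ zero block; a further normalization makes the first two border rows and columns standard basis vectors. The key point is that, because of the zero block, the coefficient of $x_ix_j$ in the determinant is $(\alpha_{k,1}\alpha_{l,2}-\alpha_{k,2}\alpha_{l,1})(\beta_{k,1}\beta_{l,2}-\beta_{k,2}\beta_{l,1})$ with $\{k,l\}=\{1,2,3,4\}\setminus\{i,j\}$, so the shifts $a_i$ disappear. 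Matching all six quadratic coefficients to those of $S_4^2$ yields $c'=\pm 1$, four relations $pp'=qq'=rr'=ss'=1$, and $(ps-rq)(p's'-r'q')=1$; eliminating the primed variables collapses everything to the single equation $(ps)^2+(rq)^2=(ps)(rq)$ in the nonzero quantities $ps$ and $rq$. Over $\mathbb{R}$ this is impossible since $(ps)^2+(rq)^2\ge 2(ps)(rq)$. Over a finite field the same equation reads $y^2-xy+x^2=0$ with $x=ps\neq 0$; its discriminant is $-3x^2$, so a solution exists only if $-3$ is a quadratic residue, and when $-3$ is a non-residue we again reach a contradiction, establishing Theorem~\ref{s42-not-ROD}.

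I expect the main obstacle to be making the rank bookkeeping airtight, and in particular the case $rank(L)=p-1$: one has to be careful that ``this monomial of $S_4^2$ is nonzero'' really forces ``this minor is nonsingular for \emph{every} scalar substitution'' (which crucially uses the \emph{absence} of the nearby-degree monomials, not just the presence of the quadratic ones), and that in the case $rank(L)=p-2$ the reduction to the $6\times 6$ normal form genuinely preserves the determinant. Once the matrix is in that normal form, deriving $(ps)^2+(rq)^2=(ps)(rq)$ and checking its unsolvability is routine.
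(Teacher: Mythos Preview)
Your proposal is correct and follows essentially the same route as the paper: the same normalization putting the $x_i$ on the diagonal, the same rank analysis of the minors $N_{i,j}$, $N_i$, and $L$ (Lemma~\ref{minor-ranks}), the same span/pigeonhole elimination of the case $rank(L)=p-1$, the same reduction of the case $rank(L)=p-2$ to a $6\times 6$ normal form via row/column operations, and the same final equation $(ps)^2+(rq)^2=(ps)(rq)$ whose unsolvability over $\mathbb{R}$ and over fields with $-3$ a non-residue yields the contradiction. Your observation that the shifts $a_i$ (and in fact the off-diagonal scalars in the $4\times 4$ block) drop out of the coefficient of $x_ix_j$ because of the $2\times 2$ zero block is exactly what the paper uses implicitly; the appeal to Lemma~\ref{ROD-bound} is unnecessary but harmless.
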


We note that we can express $S_4^2$ as a read-once determinant over $\mathbb{C}$ or e.g. over $\mathbb{F}_3$ by solving the quadratic equation in the proof of Theorem \ref{s42-not-ROD}.
 \[S_4^2(x_1,x_2,x_3,x_4)=det~ 
\begin{blockarray}{ccccccc}
 & &  &   &  &  &  \\ 
 \begin{block}{c(cccccc)}
     	&	x_{1}	& 	0		& 	0		&  	0		&  	 1 & 0   \\ 
    	&	0		& x_2		& 	0		& 	0		&    0 & 1 \\ 
   	 	& 	0		& 	0		& 	x_3		& 	0		&    1 & r^{-1} \\ 
    	& 	0		&	0		&	0	 	 &x_4      &    1 &1\\ 
  & 	1 & 0	&	1 & 1 & 0 & 0\\
  & 	0& 1	&	r & 1 & 0 & 0\\  
 \end{block}
\end{blockarray}
\]
In the case $\mathbb{F}= \mathbb{C}$ choose $r=\frac{1+\sqrt{3}i}{2}$ and in case of $\mathbb{F}_3$ choose $r= 2$ (mod $3$).  

\begin{remark}
We speculate that it should be possible to prove $S_6^2 \not \in ROD$ over \emph{any field} using similar technique as in the proof of Theorem \ref{s42-not-ROD} and that would immediately give us (slightly weaker) non-expressibility results for the general elementary symmetric polynomials and the permanent polynomial as compared to the Theorems \ref{sym-final-theorem}, \ref{Perm-not-ROD}. But we haven't worked out the details in the current work.
\end{remark}
 
Theorem \ref{s42-not-ROD} together with Lemma \ref{s42-imply-general} proves the desired non-expressibility result for elementary symmetric polynomials. 
\begin{theorem}\label{sym-final-theorem}
The polynomial $S_n^d \in \mathbb{F}[x_1,x_2,\ldots, x_n]$ is not expressible as a read-once determinant for $n\geq 4$ and $2\leq d \leq n-2$ when the field $\mathbb{F}$ is either the field of real numbers or a finite field in which $-3$ is a quadratic non-residue.  
\end{theorem}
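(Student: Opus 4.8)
The plan is to obtain Theorem \ref{sym-final-theorem} as an essentially immediate consequence of the two results already in hand: Theorem \ref{s42-not-ROD}, which states $S_4^2 \notin ROD$ over the field of reals and over finite fields in which $-3$ is a quadratic non-residue, and Lemma \ref{s42-imply-general}(1), which lifts non-membership in ROD from a small elementary symmetric polynomial to larger ones: if $S_m^k \notin ROD$, then $S_n^d \notin ROD$ whenever $d \geq k$ and $n \geq m + d - k$. Since Lemma \ref{s42-imply-general} is proved using only closure of ROD under partial derivatives and variable substitutions (Proposition \ref{closure-ROD}), it is field-independent, so the field hypotheses in the conclusion are inherited verbatim from Theorem \ref{s42-not-ROD}.

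First I would instantiate Lemma \ref{s42-imply-general}(1) with $m = 4$ and $k = 2$. Then, fixing an arbitrary pair $(n,d)$ with $2 \leq d \leq n-2$, I would verify the two index conditions: $d \geq 2 = k$ holds by assumption, and $n \geq m + d - k = 4 + d - 2 = d + 2$ is precisely the inequality $d \leq n-2$. The lemma then yields $S_n^d \notin ROD$. I would also note that these constraints already force $n \geq 4$ — when $d = 2$ the condition $d \le n-2$ gives $n \ge 4$, and for $d > 2$ it gives $n \ge d+2 > 4$ — so the statement needs no separate small-$n$ base case, and the full range $2 \le d \le n-2$ is covered.

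Since both ingredients are available, I do not expect any real obstacle: the argument reduces to checking that the arithmetic of the index bounds lines up, which it does exactly. If one instead wanted a proof avoiding the reduction lemma, the main work would be to redo, for general $n$ and $d$, the normal-form analysis behind Theorem \ref{s42-not-ROD}: assume $c \cdot S_n^d = \det(N)$ for a read-once matrix $N$, use the non-vanishing and vanishing of suitable minors (in the spirit of Lemma \ref{minor-ranks}) to bound the rank of the constant block and reduce $N$ by row and column operations to a small canonical shape, and then extract from the coefficients of the degree-$d$ monomials a system of polynomial equations in the surviving scalars that has no solution over the given fields (for $S_4^2$ this distilled to $(ps)^2 + (rq)^2 = (ps)(rq)$). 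That case analysis is the genuinely hard part in the self-contained route — but thanks to Lemma \ref{s42-imply-general} it only has to be carried out once, which the paper has already done.
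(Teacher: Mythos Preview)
Your proposal is correct and matches the paper's approach exactly: the paper also derives Theorem \ref{sym-final-theorem} directly from Theorem \ref{s42-not-ROD} via Lemma \ref{s42-imply-general}(1), and your verification that the index constraints $d\ge 2$ and $n\ge d+2$ line up with $2\le d\le n-2$ is just the arithmetic the paper leaves implicit.
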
 
In contrast, we show that the monomial set of $S_n^d$ is expressible as a read-once determinant.

\begin{proof-of}{Theorem \ref{monomial-existence}}
Let $k=n-d$ and $t=k+1$.

Let $D$ be a $n \times n$ diagonal matrix with $(i,i)^{th}$ entry $x_i$ for $i=1$ to $n$.
and $M=\left(\begin{matrix}
D &  A \\

C &  B
\end{matrix}
\right)$, where $A$, $B$, $C$ are constant block matrices of dimensions $n \times t$, $t \times t$ and $t \times n$, respectively. We shall choose $A,B,C$ such that $mon(det(M))=mon(S_n^d)$. Let $B=J_{t \times t}$ be the matrix with all $1$ entries. Let $C$ be such that $rank(C)=k$, $rank(C B)=t$ and such that any $k$ vectors in $Col(C)$ are linearly independent, where $Col(C)$ denotes set of column vectors of $C$. 
For example, we can let the $i^{th}$ column vector of $C$ be $(1,a_i,a_i^2,\ldots,a_i^{k-1},0)^T$ for distinct values of $a_i$.
Finally, let $A=C^T$.

It is clear that $det(M)$ is symmetric in the $x_i$'s; thus it suffices to prove that $x_1 x_2 \ldots x_i$ is a monomial of $det(M)$ if and only if $i=d$. For $1 \leq i \leq n$, consider the submatrix $M_i$ of $M$ obtained by removing the first $i$ rows and first $i$ columns.
We observe that $det(M_i)=x_{i+1} det(M_{i+1})$.
Let $r$ denote the minimum value of $i$ such that $det(M_i) \neq 0$. Then it can be seen that $x_1x_2 \ldots x_i$ is a monomial of $det(M)$ if and only if $i=r$.

We now prove that $det(M_i)=0$ if and only if $i>d$. Let $N_i$ denote the matrix formed by the last $t$ rows of $M_i$. Then $det(M_i)=0$ if and only if $rank(N_i)<t$. But $rank(N_i)=rank(Col(N_i))$ and by construction, $rank(Col(N_i))<k$ if and only if $n-i <k$, i.e. if $i>n-k=d$. This completes the proof of Theorem \ref{monomial-existence}.
\end{proof-of}

\subsection{Non-expressibility of Permanent as ROD}
Now we prove the non-expressibility result for $Perm_n$ (Theorem \ref{Perm-not-ROD}).

\begin{proof-of}{Theorem \ref{Perm-not-ROD}:}
We observe below that the elementary symmetric polynomial $S_4^2$ is a projection of the read-once $6 \times 6$ Permanent over reals. $
4 S_4^2(x_1,x_2,x_3,x_4)=Perm~ 
\begin{blockarray}{ccccccc}
 & &  &   &  &  &  \\ 
 \begin{block}{c(cccccc)}
     	&	x_1	& 	0		& 	0		&  	0		&  	 1 & 1   \\ 
    	&	0		& x_2		& 	0		& 	0		&    1 & 1 \\ 
   	 	& 	0		& 	0		& 	x_3		& 	0		&    1 & 1 \\ 
    	& 	0		&	0		&	0	 	 &x_4      &    1 &1\\ 
  & 	1 & 1	&	1 & 1 & 0 & 0\\
  & 	1& 1	&	1 & 1 & 0 & 0\\  
 \end{block}
\end{blockarray}
$~ So clearly if $Perm_{6}$ is a read-once projection of determinant then $S_4^2$ also is a read-once projection of determinant. But by Theorem \ref{s42-not-ROD} we know that $S_4^2 \not\in ROD$. So we get $Perm_6 \not\in ROD$. From lemma \ref{s42-imply-general} it follows that $Perm_n \not\in ROD$ for any $n>5$.
\end{proof-of}


\section{Non-expressible monomial sets}
We have seen that the elementary symmetric polynomials and the Permanent polynomial can not be expressed as read-once determinants but their monomial sets are expressible as ROD. In this section we will give examples of monomial sets which can not be expressed as read-once determinant.  
Let $f \in \mathbb{F}[x_1,\ldots,x_n]$. We say that $f$ is $k$-full if $f$ contains every monomial of degree $k$ and we say that $f$ is $k$-empty if $f$ contains no monomial of degree $k$.

\begin{theorem}\label{theorem-monomial-non-ROD}
Let $f\in \mathbb{F}[x_1, \ldots, x_n]$ and $f \in ROD$ be such that $f$ is $n$-full, $(n-1)$-empty and $(n-2)$-empty. Then $f$ can not be $k$-full for any $k$ such that $\lfloor \frac{n-1}{2} \rfloor \leq k < n$. 
\end{theorem}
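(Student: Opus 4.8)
The plan is to reduce the statement to a short counting argument about an auxiliary $n\times n$ matrix of field elements.

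Write $f=\det(M)$ for a read-once matrix $M$ of size $m$. Since $f$ is a read-once determinant it is multilinear, and being $n$-full it contains the monomial $x_1\cdots x_n$; as a single permutation must collect all $n$ variables, the $n$ variables occupy $n$ distinct rows and $n$ distinct columns of $M$. Permuting rows and columns of $M$ changes $\det(M)$ only by a sign, hence leaves $mon(f)$ unchanged, so we may assume $M_{ii}=x_i$ for $i\in[n]$; since $M$ is read-once, all other entries of $M$ lie in $\mathbb{F}$. Thus $M$ decomposes into blocks with top-left $n\times n$ block $D+E$, where $D=\mathrm{diag}(x_1,\dots,x_n)$ and $E\in\mathbb{F}^{n\times n}$ has zero diagonal, and constant blocks $B\in\mathbb{F}^{n\times p}$, $C\in\mathbb{F}^{p\times n}$, $L\in\mathbb{F}^{p\times p}$, where $p=m-n$. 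The coefficient of $x_1\cdots x_n$ in $\det(M)$ equals $\det(L)$, so $n$-fullness forces $L$ to be invertible. Let $G:=E-BL^{-1}C$ be the Schur complement of $L$ in $M|_{x=0}$.

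Next I would record a dictionary between monomials of $f$ and principal minors of $G$. Fix $T\subseteq[n]$ and put $U:=[n]\setminus T$. Since $f$ is multilinear, the coefficient of $\prod_{i\in T}x_i$ in $f$ equals the constant term of $\partial f/\partial X_T$; by Proposition~\ref{closure-ROD} this is the determinant of the minor of $M$ obtained by deleting the rows and columns indexed by $T$, evaluated at $x=0$, that is, the determinant of the principal submatrix of $M|_{x=0}$ on the index set $U\cup\{n+1,\dots,m\}$. By the Schur-complement determinant identity this equals $\det(L)\cdot\det(G[U,U])$, and since $\det(L)\neq0$ we obtain: $\prod_{i\in T}x_i\in mon(f)$ if and only if $G[U,U]$ is nonsingular. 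Specialising to $|T|=n-1$ and $|T|=n-2$, the hypotheses that $f$ is $(n-1)$-empty and $(n-2)$-empty translate respectively to: $G$ has zero diagonal, and $G_{ij}G_{ji}=0$ for all $i\neq j$.

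For the counting argument, let $D$ be the directed graph on $[n]$ with an arc $i\to j$ precisely when $G_{ij}\neq0$; the two conditions just derived say $D$ has no loops and at most one arc per unordered pair, so $D$ has at most $\binom{n}{2}$ arcs. Suppose, for contradiction, that $f$ is $k$-full for some $k$ with $\lfloor(n-1)/2\rfloor\le k<n$, and put $s:=n-k\ge1$. By the dictionary, $G[U,U]$ is nonsingular for every $U\subseteq[n]$ with $|U|=s$. I claim every vertex $v$ has out-degree at least $k+1$ in $D$. Otherwise $v$ has at most $n-s$ out-neighbours, so the set $[n]\setminus(\{v\}\cup N^+_D(v))$ has size at least $s-1$; choosing $s-1$ of its elements together with $v$ gives a set $U\ni v$ of size $s$ with $U\cap N^+_D(v)=\emptyset$, so (using also $G_{vv}=0$) the $v$-th row of $G[U,U]$ is identically zero, forcing $\det G[U,U]=0$ and contradicting nonsingularity. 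Summing out-degrees over all $n$ vertices gives $n(k+1)\le\binom{n}{2}$, i.e.\ $k\le(n-3)/2$, hence $k\le\lfloor(n-3)/2\rfloor=\lfloor(n-1)/2\rfloor-1$, contradicting $k\ge\lfloor(n-1)/2\rfloor$. Therefore $f$ is not $k$-full for any such $k$, proving the theorem.

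The step I expect to require the most care is the monomial-to-minor dictionary of the second paragraph --- justifying the passage through $\partial f/\partial X_T$, correctly setting the remaining variables to zero, and invoking the Schur-complement identity for exactly the right principal submatrix --- together with the ``without loss of generality'' normalisations of the first paragraph (moving the variables onto the diagonal, and observing this does not change $mon(f)$). The counting argument itself is then elementary.
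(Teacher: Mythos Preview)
Your proof is correct and follows essentially the same route as the paper: normalise so that the variables sit on the diagonal, reduce to an $n\times n$ scalar matrix (you do this via an explicit Schur complement $G=E-BL^{-1}C$, the paper via row/column operations on $M$ using invertibility of $L$, which is the same computation), read off from $(n-1)$- and $(n-2)$-emptiness that $G$ has zero diagonal and $G_{ij}G_{ji}=0$, and finish with a counting argument. The only cosmetic difference is in that last step: the paper uses pigeonhole to locate one row of $G$ with at least $\lceil(n-1)/2\rceil$ off-diagonal zeros and then exhibits an explicit missing degree-$k$ monomial supported on that row, whereas you run the contrapositive globally (if $f$ were $k$-full then every vertex would need out-degree $\ge k+1$, contradicting the $\binom{n}{2}$ arc bound).
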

\begin{proof}
Let $f=det(M_{m \times m})$ for a read-once matrix $M$. As $x_1x_2\ldots x_n \in mon(f)$, without loss of generality assume that the $(i,i)^{th}$ entry of $M$ is $x_i$ for $i=1$ to $n$.  Since minor corresponding to $x_1x_2\ldots x_n$ is invertible we can use elementary row and column operations on $M$ to get a matrix $N_{n \times n}$ such that $f=det N$ and the $(i,i)^{th}$ entry of $N$ is $a_i x_i + b_i$ for $a_i,b_i \in \mathbb{F}$ and $a_i \neq 0$. All the other entries of $N$ are scalars. The assumption that $f$ is $(n-1)$-empty implies that $b_i=0$ for $i=1$ to $n$. Since $f$ is $(n-2)$-empty, we also have $N(i,j)N(j,i)=0$ for all $i \neq j$, $1\leq i,j \leq n$.
So at least $\genfrac{(}{)}{0pt}{}{n}{2}$ entries of $N$ are zero. So there is a row of $N$ which contains at least $\lceil \frac{n-1}{2} \rceil$ zeros. Let $i$ be the index of that row. For $l \geq \lceil \frac{n-1}{2} \rceil$, let $a_1, a_2, \ldots, a_l$  be the column indices such that $N(i,a_i)=0$. Note that $i \not\in \{a_1, \ldots, a_l\}$. 
We want to prove that $f$ is not $k$-full for $\lfloor \frac{n-1}{2} \rfloor \leq k < n$. Let $S$ be a subset of $\{a_1, \ldots, a_l\}$ of size $n-k-1$. Note that we can pick such a set since $l \geq \lceil \frac{n-1}{2} \rceil$. Let $T=S \cup \{i\}$. Let $m=\prod_{j \not\in T} x_j$. Let $N'$ be the minor obtained by removing all the rows and columns in $\{1,2, \ldots, n\}\setminus T$ from $N$. Clearly $m\not\in mon(f)$ iff the constant term in the determinant of $N'$ is zero. Note that $N'$ contains a row with one entry $x_i$ and the remaining entries in the row are zero. So clearly the constant term in the determinant of $N'$ is zero. This shows that degree $k$ monomial $m\not\in mon(f)$. This proves the Theorem.
\end{proof}
Let $f= x_1+ x_2+ x_3 + x_4 + x_1x_2x_3x_4$. $f$ is $4$-full, $3$-empty, $2$-empty and $1$-full.   
Applying above theorem for $n=4$, we deduce that the set $mon(f)=\{x_1x_2x_3x_4,x_1,x_2,x_3,x_4\}$ is not expressible as a read-once determinant.

\section{Discussion and Open Problems}
 Under Valiant's hypothesis we know that $Perm_n$ cannot be expressed as a read-$n^{O(1)}$ determinant. Proving non-expressibility of $Perm_n$ as a read-$k$ determinant for $k>1$ unconditionally, is an interesting problem. In fact even the simplest case $k=2$ might be challenging. The corresponding PIT question of checking whether the determinant of a read-$2$ matrix is identically zero or not, is also open \cite{G00}.  

For the elementary symmetric polynomial of degree $d$ on $n$ variables, Shpilka and Wigderson gave an $O(n d^3 \log d)$ arithmetic formula \cite{SW01}. Using universality of determinant, we get an $O(n d^3 \log d)$ upper bound on $dc(S_n^d)$, in fact for non constant $d$ this is the best known upper bound on $dc(S_n^d)$ as noted in \cite{J09}. Answering the following question in either direction is interesting: Is $S_n^d$ expressible as read-$k$ determinant for $k>1$? If the answer is NO, it is a nontrivial non-expressibility result and if the answer is YES, for say $k=O(n^2)$, it gives an $O(n^3)$ upper bound on $dc(S_n^d)$, which is asymptotically better than $O(n d^3 \log d)$ for $d=\frac{n}{2}$.   

Another possible generalization of read-once determinants is the following. Let $X= \{x_{i,j}| 1 \leq i,j \leq n\}$ and consider the matrix $M_{m \times m}$ whose entries are affine linear forms over $X$ such that the coefficient matrix induced by each variable has rank one. That is if we express $M$ as $B_0 + \sum_{1\leq i,j \leq n} x_{i,j} B_{i,j}$ then $rank(B_{i,j})=1$ for $1\leq i,j \leq n$. $B_0$ can have arbitrary rank. The question we ask is: can we express $Perm_n$ as the determinant of such a matrix $M$? This model is clearly a generalization of read-once determinants
and has been considered by Ivanyos, Karpinski and Saxena \cite{IKS10}, where they give a deterministic polynomial time algorithm to test whether the determinant of such a matrix is identically zero. 
It would be interesting to address the question of expressibility of permanent in this model.

\end{document}